\newtheorem{theorem}{Theorem}[section]
\newtheorem{lemma}[theorem]{Lemma}
\newtheorem{remark}[theorem]{Remark}
\numberwithin{equation}{section}
\newcommand{\thmref}[1]{Theorem~\ref{#1}}
\newcommand{\lemref}[1]{Lemma~\ref{#1}}
\newcommand{\remref}[1]{Remark~\ref{#1}}
\begin{document}
\title[Harmonic spinors on axisymmetric 3-manifolds $\cdots$]{
Harmonic spinors on axisymmetric 3-manifolds with Melvin ends}

\author[]{A.K.M. Masood-ul-Alam$^{\flat}$, Qizhi Wang$^{\sharp}$}

\begin{abstract}
We prove the existence of harmonic spinor
fields in axisymmetric Riemannian 3-manifolds having nonnegative
scalar curvature and asymptotic to the usual constant time
hypersurface of Melvin's magnetic universe. Such a spinor can be
used in the proof of the uniqueness of the magnetized
Schwarzschild solution.
\end{abstract}

\maketitle

\textbf{MSC2010. \ } 53C21, 53C24, 53Z05 \\
\textbf{Key words. \ } Dirac equation, \ cylindrical ends, \ magnetic
universe  \\

\section{Introduction}
It appears that there are not many works on the Dirac equation on
an asymptotically cylindrical Riemannian manifold, although the
Laplacian and similar elliptic differential operators on complete
manifolds with warped cylindrical ends have been studied (Lockhart
and McOwen
\cite{LM}, Ma and McOwen
\cite{MaMc} and references cited therein). On the other hand
harmonic spinors are important tools for proving rigidity and
uniqueness results in differential geometry and in the study of
time-symmetric black-hole solutions of Einstein equation. Many of
the physical problems in potential theory originate under the
assumption of isolated bodies surrounded by empty space and under
this setting asymptotically flat assumption naturally arises.
Strictly speaking in the physical world existence of a magnetic
field is more natural than empty space. Speaking very roughly the
solenoidal nature of the magnetic field relates to lack of
asymptotic spherical symmetry. Also one of the reasons for
assuming asymptotical flatness is that it can provide finite
energy and hence stable solutions. Magnetic fields provide an
example of an infinite energy solution stable under radial
perturbations. Thus in the Melvin magnetic universe (MMU) solution
of Einstein-Maxwell equations diverging coaxial tubes of magnetic
lines of force are held together by gravitational attraction in
such a way that under a radial perturbation they neither collapse
nor explode but settle down to the original solution. MMU is not a
finite energy solution using a physically reasonable definition of
energy. Its \textquotedblleft constant
 time" hypersurface is not an asymptotically
flat 3-manifold. We want to solve the Dirac equation on a
Riemannian 3-manifold which is in some sense asymptotically
Melvin. An asymptotically flat Riemannian 3-manifold has a concept
of mass when the decay to flatness is reasonably rapid (Bartnik
\cite{Bart}). This concept is of differential geometric origin though
it has been discovered in the study of physics and it corresponds
to the energy of an appropriate spacetime. The positive mass
theorem of Schoen and Yau
\cite{SY} says that the mass is nonnegative if the scalar curvature
is nonnegative and gives a rigidity result for $\mathbb{R}^{3}$
when the mass is zero.

Thorne \cite{T1,T2} has defined a concept of energy (C-energy) for
a finite region of a cylindrical spacetime. Radinschi and Yang
\cite{RY} considered another concept of such (quasi-local)
energy for MMU. Although MMU has infinite energy comparing it with
the magnetized Schwarzschild solution an asymptotically Melvin
(defined rigorously later) Riemannian 3-manifold can be assigned a
mass-like parameter using the decay coefficient of the metric. It
is the coefficient of a term of faster decay. While in the absence
field equations giving more information on the Ricci curvature we
could not prove the positivity of this decay coefficient assuming
only the nonnegativity of the scalar curvature,
Weitzenb\"ok-Lichnerowicz identity applied to the harmonic spinor
of appropriate decay gives a positive mass type theorem involving
the decay coefficient of the term of slower decay. This parameter
gives the magnetic field in the physical spacetime. Our formula
gives this asymptotic parameter in terms of the harmonic spinor
weighted scalar curvature integral and the integral for the norm
of the covariant derivative of the spinor.

In this paper we have not considered higher dimensions and all
types of cylindrical ends. We also did not look for optimal
generalizations of the parameters of the weighted Sobolev spaces
involved. These generalizations and related boundary value
problems will be studied elsewhere by one of the authors (QW).
Here we want to show the steps involved in our proofs by
considering not too technical problems and we keep the article
readable by multi-disciplinary researchers. One application of the
existence of a harmonic spinor we prove here occurs in the proof
of the uniqueness of the magnetized Schwarzschild solution in
\cite{Mas} where this existence is assumed without proof. In fact
existence of such harmonic spinors may be useful in the hitherto
unsolved problems of extending the black hole uniqueness theorems
(\cite{Wel}) in many other magnetized worlds.

\section{Preliminaries}
We consider a smooth complete Riemannian $3$-manifold $\left(
\Sigma ,\hat{g}
\right) $ having the metric of the form
\[
\hat{g}=\bar{g}+Xd\phi ^{2}
\]
where the function $X$ and the 2-metric $\bar{g}$ are independent
of the coordinate $\phi.$ $(\partial /\partial
\phi)$ is a Killing vector field and $\phi$ is a globally defined function
except on the fixed point set of the isometry. We assume that the
orbits of this Killing vector field are closed with period $2\pi.$
We shall call such a metric axisymmetric and the fixed point set
is the axis of symmetry. In general we say a complete Riemannian
$3$-manifold with metric $h$ has ends if outside a compact subset
$K$ it is a finite union of disjoint sets $U_{i}$ each
diffeomorphic to $\mathbb{R}^{3}\setminus B,$ $B$ being a closed
ball. Here $i$ counts the number of ends. Let $b$ be a nonnegative
constant. We say that $\hat{g}$ is asymptotically Melvin with
parameter $b$ if at any end $U$ w.r.t. the Euclidean spherical
coordinates $\left\{ r,\theta ,\phi
\right\} $ in $\mathbb{R}^{3}\setminus B,$ $h$ has the form of $\hat{g}$
with
\begin{eqnarray}
\bar{g} &=&(1+v_{1})F^{2}(dr^{2}+r^{2}d\theta ^{2}), \label{1}\\ X &=&( 1+v_{2})
F^{\;-2}r^{2}\sin ^{2}\theta \label{2}
\end{eqnarray}
where
\begin{equation*}
  F=1+b r^{2}\sin ^{2}\theta,
\end{equation*}
and $v_{1},v_{2}\in W_{-\tau +1}^{2,p}$ for large enough $q$ and
$\tau >1/2.$ For the existence of harmonic spinors the
restrictions on $p$ and $\tau$ are not optimal. We assume $p\geq
4$ so that geodesic equation at each point of the tangent bundle
has a unique solution. We shall define the weighted Sobolev spaces
later. They say for $i=1,2$
\begin{equation*} \label{v12}
v_{i}=O( r^{-1}) ,\partial v_{i}=O( r^{-2}),
\text{ }\partial ^{2}v_{i}\in L_{-\tau -2}^{p}(
\mathbb{R}^{3}\setminus B)
\end{equation*}
where the last conditions imply that in some sense $\partial
^{2}v_{i}=o( r^{-\tau-2}).$ In some applications we can take out
the $O(r^{-1})$ term from $v_{1}$ and then we assume
\begin{equation} \label{v3}
\bar{g} =(1+v_{3})F^{2}\left(
( 1-2M/r) ^{-1}dr^{2}+r^{2}d\theta ^{2}\right)
\end{equation}
where $M$ is a constant and $v_{3}\in W_{-\tau }^{2,p}.$

$W^{2,p}_{\delta}\equiv W^{2,p}_{\hat{g},\delta}$ is the space of
measurable $\mathbb{C}^{2}$ valued functions $u$ in $L^{p}_{\rm
loc}$ such that
\begin{equation} \label{sbspace1}
\int_{\Sigma,\hat{g}}\left|\dfrac{\partial^{|l|}}{\partial x^{l_{1}}\partial
x^{l_{2}}\partial x^{l_{3}}}u\right|^{p}\left(\sqrt{1+r^{2}}\
\right)^{-\delta p+|l|p-3}
\end{equation}
are finite for $|l|=0,1,2.$ Here $l=\{l_{1},l_{2},l_{3}\}$ is the
multi-index. For $u=(u^{1},u^{2})\in
\mathbb{C}^{2}$ we have $|u|^{2}=u^{1}\overline{u^{1}}+u^{2}\overline{u^{2}}.$ $\dfrac{\partial^{|l|}}{\partial x^{l_{1}}\partial
x^{l_{2}}\partial x^{l_{3}}}$ means an $l$-th weak partial
derivative. Powers of $\sqrt{1+r^{2}}$ is included to specify the
decay rate as $r\rightarrow \infty.$ Since $r$ is not defined
outside $U$ we interpret this factor to be a positive function in
the subset $\Sigma\setminus U.$

By Melvin's 3-metric we mean the following metric.
\begin{equation*}
  g_{\text{MMU}}=\left(1+(1/4)B^{2}r^{2}\sin^{2}\theta\right)^{2}(dr^{2}+r^{2}d\theta^{2})+
  \left(1+(1/4)B^{2}r^{2}\sin^{2}\theta\right)^{-2}r^{2}\sin^{2}\theta
  d\phi^{2}.
\end{equation*}
It is the $3$-metric induced on the \textquotedblleft constant
time" hypersurface of Melvin's magnetic universe. Since we shall
work mainly with $SU(2)$ spinors (two-component spinors) in
$3$-dimension we give the definition of weighted Sobolev spaces
for $\mathbb{C}^{2}$-valued functions. We shall denote the spaces
for the real-valued functions by the same notation because it will
be clear from the context what we mean. Since we shall be working
with spinors for two Riemannian metrics $\hat{g}$ and $g$ it is
better to stress that we shall be using the same $\mathbb{C}^{2}$
fields for our spinors. Thus locally the \textquotedblleft square"
of a $\mathbb{C}^{2}$ field $\xi$ will correspond to two local
vector fields on the manifold both vector fields having the same
components but in two different sets of orthonormal basis vector
fields, orthonormal respectively in the metrics $g$ and $\hat{g}.$
We shall denote by $\{\hat{e}_{i}\}$ the orthonormal basis for
$\hat{g}$ and by $\{e_{i}\}$ the orthonormal basis for $g.$ We do
not need to explicitly state $\xi$ as $g$-spinor or
$\hat{g}$-spinor because the operators acting on it will have the
subscript $g$ or $\hat{g}.$ Thus when we write $D_{g}\xi$ we
understand that $\xi$ is a $g$-spinor. For spinor
$\xi=(\xi^{1},\xi^{2})\in \mathbb{C}^{2}$ the norm is denoted by
$||\xi||^{2}=\xi^{1}\overline{\xi^{1}}+\xi^{2}\overline{\xi^{2}}.$
The integral in Eq.~\eqref{sbspace1} has the volume form of
$\hat{g}.$ Thus we define the $W^{2,p}_{\delta}$ norm for
functions on $(\Sigma,\hat{g})$ by
\begin{equation}
||u||_{\hat{g},2,p,\delta}=\sum\limits_{|l|=0}^{|l|=3}\left(\int\left|\dfrac{\partial^{|l|}}
{\partial x^{l_{1}}\partial x^{l_{2}}\partial
x^{l_{3}}}u\right|^{p}\left(\sqrt{1+r^{2}}\ \right)^{-\delta
p+|l|p-3}\sqrt{\det{\hat{g}}}\right)^{\dfrac{1}{p}}.
\end{equation}
$||u||_{\hat{g},p,\delta}$ will denote the weighted Lebesgue norm
$L^{p}_{\hat{g},\delta}.$

We also need to consider Sobolev norm $||u||_{2,p,\delta}$
relative to the following asymptotically flat metrics
corresponding to Eqs.~(\ref{1},\ref{v3}):
\begin{eqnarray}
g&=&(1+v_{1})[( dr^{2}+r^{2}d\theta
^{2}]+(1+v_{2})r^{2}\sin^{2}\theta d^{2}\phi,
 \label{AF1}\\ g&=&(1+v_{3})[(
1-2M/r)^{-1}dr^{2}+r^{2}d\theta ^{2}]+(1+v_{2})r^{2}\sin^{2}\theta
d^{2}\phi. \label{AF2}
\end{eqnarray}
In the rest of this section and in the next section we shall use
Eq.~\eqref{AF1} for $g.$ Since the $3$-measure of $\hat{g}$ in $U$
is
\begin{eqnarray}\label{3measureghat}
  \sqrt{\det{\hat{g}}}d\theta d\phi dr&=&F\sqrt{\det{g}}d\theta d\phi dr, \qquad (1\leq F\leq
  C_{1}(1+r^{2})),\nonumber\\
  &=&F(1+v_{1})^{1/2}(1+v_{2})^{1/2}r^{2}\sin\theta d\theta d\phi
  dr,
\end{eqnarray}
functions in $W^{2,p}_{\delta}$ with finite
$||\cdot||_{\hat{g},2,p,\delta}$ norm will be in
$W^{2,p}_{\delta+2/p}$ with finite $||\cdot||_{2,p,\delta+2/p}$
norm for the asymptotically flat metric. These two norms are
however not equivalent because $\sin\theta=0$ kills the $r^{2}$
growth in $F.$ Now to show the existence of a harmonic spinor
suitable for proving the positive mass theorem for an
asymptotically flat Riemannian $3$-manifold Bartnik showed
(Proposition 6.1 in
\cite{Bart}, minding a typo) that the Dirac operator is an
isomorphism from $W^{2,p}_{-\delta}$ onto $W^{1,p}_{-\delta-1}$
where $\delta\in (0,2).$ So for an asymptotically Melvin manifold
with the metric independent of $\phi$ we seek harmonic spinors in
the Sobolev spaces
\begin{equation*}
\mathbb{W}^{2,p}_{-\delta}=\{\xi\in
W^{2,p}_{-\delta}\mid\delta\in(2/p,2-2/p),
\partial_{\phi}\xi=0\}.
\end{equation*}
We now state the main result. It is proved in the next section.
\begin{theorem} \label{existthm}
Suppose the scalar curvature $R_{\hat{g}}\geq 0$ and $\left(
\Sigma ,\hat{g}\right) $ is asymptotically Melvin with parameter $b>0$ having finite number of
asymptotic ends. Let $\epsilon\in (2/p,2-2/p).$ Then the Dirac
operator $D_{\hat{g}}:\mathbb{W}^{2,p}_{-\epsilon}\longrightarrow
 \mathbb{W}^{1,p}_{-\epsilon-1-2/p}$ is an isomorphism onto the
 range of $D_{\hat{g}}.$
\end{theorem}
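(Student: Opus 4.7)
\emph{Approach.} To prove Theorem \ref{existthm} I would proceed in two stages: first establish injectivity using the Weitzenb\"ock--Lichnerowicz identity and $R_{\hat g}\ge 0$, then establish closed range by transferring Bartnik's asymptotically flat isomorphism for $D_g$ (Proposition 6.1 of \cite{Bart}) to the Melvin setting. Injectivity together with closed range yields the isomorphism onto the range.

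\textbf{Injectivity.} Suppose $\xi\in\mathbb{W}^{2,p}_{-\epsilon}$ satisfies $D_{\hat g}\xi=0$. Apply $D_{\hat g}^{2}=\hat\nabla^{*}\hat\nabla+\tfrac{1}{4} R_{\hat g}$, pair with $\xi$ and integrate over $\Sigma_{R}=\Sigma\cap\{r\le R\}$:
\begin{equation*}
\int_{\Sigma_{R}}|\hat\nabla\xi|^{2}+\tfrac{1}{4}\int_{\Sigma_{R}}R_{\hat g}\|\xi\|^{2}=\int_{\partial\Sigma_{R}}\langle\hat\nabla_{\hat\nu}\xi,\xi\rangle.
\end{equation*}
Because $\epsilon>2/p$, the spinor $\xi$ and its covariant derivative decay in the weighted $L^{p}$ sense on each Melvin end; a standard mean-value selection yields a sequence $R_{j}\to\infty$ along which the boundary integral vanishes. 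Hence $\hat\nabla\xi\equiv 0$ and $R_{\hat g}\|\xi\|^{2}\equiv 0$. A parallel spinor has constant pointwise norm, and the weighted decay at infinity forces this constant to be zero, so $\xi\equiv 0$.

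\textbf{Closed range.} On $\phi$-invariant fields the frame relations $\hat e_{1,2}=F^{-1}e_{1,2}$, $\hat e_{3}=Fe_{3}$ together with $\partial_{\phi}\xi=0$ permit the splitting
\begin{equation*}
D_{\hat g}=D_g+P,
\end{equation*}
where $D_g$ is the Dirac operator of the asymptotically flat metric $g$ in \eqref{AF1} and $P$ is a perturbation whose coefficients involve $F^{-1}$, $F^{-1}\partial F$, $v_{1}$ and $v_{2}$ and are small at infinity. The volume-form identity \eqref{3measureghat} matches $\hat g$-weighted spaces with $g$-weighted spaces up to a $2/p$ shift, and the admissible interval $\epsilon\in(2/p,2-2/p)$ corresponds precisely to Bartnik's interval $(0,2)$ for the $g$-decay rate. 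Applying Bartnik's isomorphism to $D_g$ and absorbing $P$ by a perturbation/compactness argument produces an a priori estimate
\begin{equation*}
\|\xi\|_{\hat g,2,p,-\epsilon}\le C\bigl(\|D_{\hat g}\xi\|_{\hat g,1,p,-\epsilon-1-2/p}+\|\xi\|_{L^{p}(K)}\bigr)
\end{equation*}
for a compact $K\subset\Sigma$. Combined with injectivity this gives closed range and hence the desired isomorphism onto the range.

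\textbf{Main obstacle.} The chief subtlety is that $\hat g$ and $g$ are \emph{not} conformally related---the Melvin factor $F$ enters the $\bar g$ block as $F^{2}$ and the $Xd\phi^{2}$ block as $F^{-2}$---so the familiar conformal covariance of the Dirac operator cannot be invoked. Axial symmetry substitutes for it: since $\partial_{\phi}\xi=0$, the $F$ in $\hat e_{3}=Fe_{3}$ does not act as a derivative on $\xi$, and the comparison with $D_g$ can be reduced to a controlled perturbation. The real work is verifying, uniformly in the weighted $L^{p}$ norms, that $P$ is small enough at infinity and remains manageable near the axis $\sin\theta=0$, where $F\to 1$ and the Melvin volume growth collapses to the flat $r^{2}$ scaling. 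Threading Bartnik's admissible window $(0,2)$ through the $2/p$ volume-form shift is precisely what fixes both the interval $\epsilon\in(2/p,2-2/p)$ and the target weight $-\epsilon-1-2/p$ appearing in the statement.
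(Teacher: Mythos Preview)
Your injectivity argument via the Weitzenb\"ock--Lichnerowicz identity is essentially the paper's, and is fine.

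The gap is in the closed-range step. You write $D_{\hat g}=D_{g}+P$ and assert that the coefficients of $P$ are small at infinity. But on $\phi$-invariant spinors the principal part of $D_{\hat g}$ is $\sum_{A=1,2}\hat e^{A}\cdot\partial_{\hat e_{A}}=F^{-1}\sum_{A}e^{A}\cdot\partial_{e_{A}}$, so the leading term of $D_{\hat g}$ is $F^{-1}D_{g}$, not $D_{g}$. The difference $D_{\hat g}-D_{g}$ therefore contains $(F^{-1}-1)D_{g}$, and $F^{-1}-1=-br^{2}\sin^{2}\theta/F$ is bounded but does \emph{not} decay as $r\to\infty$ (it tends to $-1$ off the axis). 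Consequently your $P$ is not small in any operator norm between the stated weighted spaces, and the standard ``small perturbation of a Fredholm operator'' argument does not apply to the decomposition you propose.

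The paper's remedy is to take $F^{-1}D_{g}$, not $D_{g}$, as the model operator. Bartnik gives $D_{g}:\mathbb{W}^{2,p}_{-\epsilon}\to\mathbb{W}^{1,p}_{-\epsilon-1}$ semi-Fredholm; one then argues directly that multiplication by the bounded function $F^{-1}$ preserves the finite kernel and carries the closed range into a closed subset of the smaller space $\mathbb{W}^{1,p}_{-\epsilon-1-2/p}$, so that $F^{-1}D_{g}$ is semi-Fredholm into the \emph{shifted} target. Only after this step is $D_{\hat g}$ a genuine $O'(r^{-1})$ perturbation of the model, and the Nirenberg--Walker/Bartnik stability argument closes. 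In particular the extra $-2/p$ in the target weight comes from the factor $F^{-1}$ sitting in front of $D_{g}$, not merely from the volume-form comparison \eqref{3measureghat} that you invoke; your decomposition misses this structural point and the resulting perturbation is uncontrollable.
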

If $b=0$ we cannot decrease the exponent in the
 target space by $-2/p$ in the proof presented below. As a result we get the isomorphism $D_{\hat{g}}:\mathbb{W}^{2,p}_{-\epsilon}\longrightarrow
 \mathbb{W}^{1,p}_{-\epsilon-1}.$ If $b=0$ then $\hat{g}$ is also asymptotically
 flat.

\section{Existence of a harmonic spinor}
We prove \thmref{existthm} using the method of Bartnik
\cite{Bart} and the fact that Fredholm property is the same for two
bounded linear operators sufficiently close.
\begin{proof}[Proof of \thmref{existthm}.]
First we show that the Dirac operator of $\hat{g},$ namely
$D_{\hat{g}},$ is a small perturbation in some appropriate sense
of a
\textquotedblleft weighted" Dirac operator of the asymptotically
flat metric $g$ given in Eq.~\eqref{AF1} so that arguments as in
Theorem 1.10 of Bartnik
\cite{Bart} prove the semi-Fredholm property of the $D_{\hat{g}}$ in case
 the later operator denoted $D_{g}$ is semi-Fredholm.
For simplicity let us first assume that we have only one end $U.$
Let $\{e^{i}\}$ be an orthonormal frame field of $1$-forms
relative to $g=F^{-2}\bar{g}+F^{2}Xd\phi ^{2}$ where we used
Eqs.~(\ref{1},\ref{AF1}). In $U$ we choose,
\begin{equation*}
  e^{1}=(1+\nu_{1})^{1/2}dr, e^{2}=(1+\nu_{1})^{1/2}rd\theta, e^{3}=(1+\nu_{2})^{1/2}r\sin\theta
  d\phi.
\end{equation*}
Then we see that $\{\hat{e}^{i}\}$ where $\hat{e}^{i}=Fe^{i}$ for
$i=1,2$ , and $\hat{e}^{3}=F^{-1}e^{3}$ will be an orthonormal
frame field of $1$-forms relative to $\hat{g}.$ We note that for
the vectors of the dual frames, $\hat{e}_{i}=F^{-1}e_{i}$ for
$i=1,2$ , and $\hat{e}_{3}=Fe_{3}.$ We note that
\begin{equation*}
  \hat{g}_{rr}=(1+v_{1})F^{2}=F^{2}g_{rr}, \quad
\hat{g}_{\theta\theta}=(1+v_{1})F^{2}r^{2}=F^{2}g_{\theta\theta}, \quad
\hat{g}_{\phi\phi}=F^{-2}g_{\phi\phi}.
\end{equation*}
Let $\Gamma,\hat{\Gamma}$ be the Christoffel symbols of the metric
$g,\hat{g}$ in coordinates $\{r,\theta,\phi\},$ where $A=1,2$
corresponding to $r,\theta.$
\begin{eqnarray*}
&&\Gamma_{\phi \phi }^{\phi }=0, \, \Gamma_{AB}^{\phi }=0,
\, \Gamma_{B\phi }^{A}=0,
\, \Gamma_{\phi A}^{\phi }=(1/2)(
\partial \ln (XF^{2})/\partial x^{A}),\\
&&\hat{\Gamma}_{\phi \phi }^{\phi }=0, \, \hat{\Gamma}_{AB}^{\phi
}=0,
\, \hat{\Gamma}_{B\phi }^{A}=0, \,
\hat{\Gamma}_{\phi A}^{\phi }=(1/2)(
\partial \ln X/\partial x^{A}).
\end{eqnarray*}
We now give the relation between the
 \textquotedblleft connection coefficients"
$C_{mij}=\left\langle e_{m},\nabla_{e_{i}}e_{j}\right\rangle_{g}$
and $\hat{C}_{mij}=\left\langle
\hat{e}_{m},\hat{\nabla}_{\hat{e}_{i}}\hat{e}_{j}\right\rangle_{\hat{g}}.$
On $C,\hat{C}$ the indices refer to the corresponding frame
fields. We take $A=1,2.$ Since $C_{ABC}$ and $\hat{C}_{ABC}$ are
antisymmetric in $A,C$ and $C_{3 AB}=0=C_{A3 B}=\hat{C}_{3
AB}=\hat{C}_{AB3 }$ we need to compute only
$C_{2B1},C_{33B},\hat{C}_{2B1}$ and $\hat{C}_{33B}.$ We have
\begin{eqnarray*}
  \hat{C}_{211} &=& F^{-1}C_{211}+
bO(F^{-2}\sin\theta\cos\theta) \text{ with } C_{211}=O(r^{-2}).\\
 \hat{C}_{221}  &=& F^{-1}C_{221}+ bO(F^{-2}r\sin^{2}\theta)
\text{ with } C_{221}=O(r^{-1}). \\
  \hat{C}_{332 } &=&F^{-1}C_{332}+bO(F^{-2}r\sin\theta\cos\theta).\\
  \hat{C}_{331 } &=&F^{-1}C_{331}+bO(F^{-2}r\sin^{2}\theta).
\end{eqnarray*}
 Recalling that on the elements of $\mathbb{C}^{2},$
Clifford multiplications $\hat{e}^{i}\hat{\cdot}$ and $e^{i}\cdot$
both means multiplication by the same ($\sqrt{-1}$ times) Pauli
matrix $\sqrt{-1}\sigma_{k},$ we see
\begin{eqnarray*}
D_{\hat{g}}&=&\hat{e}^{i}\hat{\cdot}\hat{\nabla}_{\hat{e}_{i}}
=\hat{e}^{i}\hat{\cdot}\left(\partial_{\hat{e}_{i}}-(1/4)\left\langle
\hat{e}_{m},\hat{\nabla}_{\hat{e}_{i}}\hat{e}_{j}\right\rangle_{\hat{g}} \hat{e}^{m}\hat{\cdot}
\hat{e}^{j}\hat{\cdot}\right)
\nonumber \\&=&
e^{i}\cdot\left(\partial_{\hat{e}_{i}}-(1/4)\left\langle
\hat{e}_{m},\hat{\nabla}_{\hat{e}_{i}}\hat{e}_{j}\right\rangle_{\hat{g}} e^{m}\cdot
e^{j}\cdot\right)\nonumber
\\&=&
e^{A}\cdot\left(\partial_{\hat{e}_{A}}-(1/4)\hat{C}_{mAj}
e^{m}\cdot
e^{j}\cdot\right)+e^{3}\cdot\left(\partial_{\hat{e}_{3}}-(1/4)\hat{C}_{m3j}
e^{m}\cdot e^{j}\cdot\right)\nonumber
\\&=&
e^{A}\cdot\left(\partial_{\hat{e}_{A}}-(1/4)\hat{C}_{BAC}
e^{B}\cdot
e^{C}\cdot\right)+e^{3}\cdot\left(\partial_{\hat{e}_{3}}\right)+O(r^{-1})
\text{ since }\hat{C}_{33A}=O(r^{-1}) \nonumber
 \\&=&\sum_{i=1,2}F^{-1}{e}^{i}\cdot\nabla_{e_{i}}
+F{e}^{3}\cdot\partial_{e_{3}}+
O(r^{-1})+bO(F^{-2}r\sin^{2}\theta).
\end{eqnarray*}
Since $\int\limits_{0}^{\pi}F^{-2}r\sin^{2}\theta d\theta=(\pi/2)
r(1+br^{2})^{-3/2},$ in the average sense
$O(F^{-2}r\sin^{2}\theta)$ is $o(r^{-1}).$ Using the notation
$O(r^{-1})+bO(F^{-2}r\sin^{2}\theta)=O^{\prime}(r^{-1})$ we thus
have
\begin{equation}\label{oldeqn9}
D_{\hat{g}}=\sum_{i=1,2}F^{-1}{e}^{i}\cdot\nabla_{e_{i}}
+F{e}^{3}\cdot\partial_{e_{3}}+ O^{\prime}(r^{-1}).
\end{equation}
We denote the operator
$\sum_{i=1,2}F^{-1}{e}^{i}\cdot\nabla_{e_{i}}+F{e}^{3}\cdot\partial_{e_{3}}$
by $P$. We note that $P=F^{-1}D_{g}+O(r^{-1}):
\mathbb{W}^{2,p}_{-\epsilon}\longrightarrow
\mathbb{W}^{1,p}_{-\epsilon-1-2/p}$ since spinors in $\mathbb{W}$ spaces are independent of $\phi,$
and $F^{-1}C_{33A}=O(r^{-1}).$
 Since $g$ is asymptotically flat, $D_{g}:\mathbb{W}^{2,p}_{-\epsilon}\longrightarrow
\mathbb{W}^{1,p}_{-\epsilon-1}$ is a
semi-Fredholm operator by Proposition 6.1 in \cite{Bart}. As
$F^{-1}$ is bounded, so $P$ is a bounded linear operator. To check
the semi-Fredholm property of $P,$ we first show that
$F^{-1}D_{g}$ is semi-Fredholm. First we note that $\ker
F^{-1}D_{g}=\{\xi
\in \mathbb{W}^{2,p}_{-\epsilon}: F^{-1}D_{g}\xi=0\}$ is finite
dimensional because $\ker D_{g}$ is finite dimensional. Second we
note that range of $F^{-1}D_{g},$
$F^{-1}D_{g}\left(\mathbb{W}^{2,p}_{-\epsilon}\right)$ is closed
in $\mathbb{W}^{1,p}_{-\epsilon-1}.$ This is because the range of
$F^{-1}D_{g}$ is the set of spinors in the range of $D_{g}$
multiplied by the bounded function $F^{-1},$ and the range of
$D_{g}$ is closed in $\mathbb{W}^{1,p}_{-\epsilon-1}.$ Now the
range of $F^{-1}D_{g}$ is a subset of
$\mathbb{W}^{1,p}_{-\epsilon-1-2/p}\subset
\mathbb{W}^{1,p}_{-\epsilon-1}.$ So the
range of $F^{-1}D_{g}$ is closed in
$\mathbb{W}^{1,p}_{-\epsilon-1-2/p}.$

This establishes that $F^{-1}D_{g}$ is semi-Fredholm. Now the
operator $P$ is a perturbation of $F^{-1}D_{g}$ in the operator
norm
\begin{equation*}
  ||P-F^{-1}D_{g}||_{\text{op}}=\sup\left\{||(P-F^{-1}D_{g})u||_{1,p,-\epsilon-1-2/p}:u\in
  \mathbb{W}^{2,p}_{-\epsilon},||u||_{2,p,-\epsilon}=1\right\}
\end{equation*}
because $||P-F^{1}D_{g}||_{\text{op},R}=o(1)$ as $R\rightarrow
\infty.$
Thus by the arguments of Theorem 1.10 of Bartnik \cite{Bart}
adapted to a first order operator (for general order see section 4 in Nirenberg
and Walker \cite{NW}), $P$ is semi-Fredholm. Since $D_{\hat{g}}$
is a perturbation of $P$ similar arguments shows that
\begin{equation} \label{iso}
D_{\hat{g}}:\mathbb{W}^{2,p}_{-\epsilon}\longrightarrow
\mathbb{W}^{1,p}_{-\epsilon-1-2/p}
\end{equation}
is semi-Fredholm onto the range of $D_{\hat{g}}.$ Next we show
that the kernel of the adjoint operator $D_{\hat{g}}^{\ast}$ is
trivial. Let $\psi\in \text{Ker }D_{\hat{g}}^{\ast}.$ After
extending $\psi$ to the whole of
$\mathbb{W}^{1,p}_{-\epsilon-1-2/p},$considering $\psi$ to be an
element of the dual space
$W^{2,\hat{p}}_{\epsilon+2/p-2},$ and using the formal
self-adjointness of the Dirac operator we see that $\psi\in
\text{Ker
}D_{\hat{g}}.$ (This dual space involves the norm
$||\cdot||_{2,\hat{p},\epsilon+2/p-2}$ relative to the
asymptotically flat metric. We recall that functions in
$W^{2,\hat{p}}_{\epsilon-2}$ with finite
$||\cdot||_{\hat{g},2,\hat{p},\epsilon-2}$ norm are in
$W^{2,\hat{p}}_{\epsilon+2/p-2}$ \ with finite
$||\cdot~||_{2,\hat{p},\epsilon+2/p-2}$ norm). But $\hat{g}$ has nonnegative scalar curvature and
 $\psi$ vanishes at infinity. So by virtue of the
Weitzenb\"ock-Lichnerowicz formula and maximum principle the
kernel of $D_{\hat{g}}^{\ast}$ is trivial and hence the
semi-Fredholm operator $D_{\hat{g}}$ is Fredholm. In fact
$\text{Ker }D_{\hat{g}}$ is also trivial. Thus
\eqref{iso}  is an isomorphism.
To generalize the result for more than one end we define the
Sobolev spaces with $r$ extended in $\Sigma$ so that it matches at
all ends with the asymptotic radial coordinates (for details see
p.230 in Parker and Taubes \cite{PT}). As stated before
$(\Sigma,\hat{g})$ is globally axisymmetric. However for this
proof, axisymmetry is necessary only at the ends of $\Sigma.$
\end{proof}

To get a suitable harmonic spinor we need the transformation
formulas stated in the following lemma.
\begin{lemma} \label{lemmaq}
If $\hat{g}=\zeta^{2}g$, then
$D_{\hat{g}}(\zeta^{-1}\xi)=\zeta^{-2}D_{g}\xi$. If the spinor
satisfies $\partial_{\phi}\xi=0$ and $\hat{g}=\bar{g}+fd\phi^{2},
g=\bar{g}+qfd\phi^{2}$, then
$D_{\hat{g}}(q^{-\frac{3}{8}}\xi)=q^{-\frac{3}{8}}D_{g}\xi$.
\end{lemma}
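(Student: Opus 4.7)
The plan is to reduce both identities to direct Koszul / Clifford computations in adapted orthonormal frames, exploiting the paper's convention that Clifford multiplication by $\hat{e}^{i}$ in $\hat{g}$ and by $e^{i}$ in $g$ act on the $\mathbb{C}^{2}$-valued spinor $\xi$ through the same matrix $\sqrt{-1}\sigma_{i}$. This lets $D_{g}\xi$ and $D_{\hat{g}}\xi$ be compared directly as two $\mathbb{C}^{2}$-valued expressions, without any re-identification of spinor bundles.

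For the conformal identity $\hat{g}=\zeta^{2}g$, I would choose a $g$-orthonormal frame $\{e_{i}\}$, set $\hat{e}_{i}=\zeta^{-1}e_{i}$ and $\hat{e}^{i}=\zeta e^{i}$, and express $[\hat{e}_{i},\hat{e}_{j}]$ in terms of $[e_{i},e_{j}]$ and $d\zeta$. Koszul then gives
\[
\hat{C}_{mij}=\zeta^{-1}\bigl(C_{mij}+\delta_{im}(e_{j}\ln\zeta)-\delta_{ij}(e_{m}\ln\zeta)\bigr).
\]
Assembling $D_{\hat{g}}\xi=\hat{e}^{i}\hat{\cdot}\hat{\nabla}_{\hat{e}_{i}}\xi$ and applying the $n=3$ Clifford identities $\sum_{i}(e^{i})^{2}=-3$ and $\sum_{i}e^{i}\cdot\eta\cdot e^{i}=(n-2)\eta$ reduces the computation to $D_{\hat{g}}\xi=\zeta^{-1}D_{g}\xi+\zeta^{-1}(d\ln\zeta)\cdot\xi$. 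Substituting $\zeta^{-1}\xi$ for $\xi$ and using $(d\zeta^{-1})\cdot\xi=-\zeta^{-1}(d\ln\zeta)\cdot\xi$ cancels the extra term and yields $D_{\hat{g}}(\zeta^{-1}\xi)=\zeta^{-2}D_{g}\xi$.

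For the warped identity the setup is simpler, because $\bar{g}$ is shared: take $\{e_{A}\}_{A=1,2}$ orthonormal for $\bar{g}$ with $e_{A}=\hat{e}_{A}$, so only the $\phi$-leg differs, $e_{3}=(qf)^{-1/2}\partial_{\phi}=q^{-1/2}\hat{e}_{3}$. The hypothesis $\partial_{\phi}\xi=0$ kills $\partial_{e_{3}}\xi$ and $\partial_{\hat{e}_{3}}\xi$. A Koszul calculation based on $[e_{B},e_{3}]=-(e_{B}\ln N_{g})e_{3}$ (with $N_{g}^{2}=qf$) shows that among the connection coefficients carrying a $3$-index, only $C_{33A}=e_{A}\ln N_{g}$ and its antisymmetry partner $C_{A33}$ survive; the coefficients $C_{AB3}$, $C_{A3B}$, $C_{3AB}$ vanish because the $\bar{g}/\phi$ decomposition is orthogonal and $\phi$-independent. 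Using the Clifford rules $e^{3}e^{A}=-e^{A}e^{3}$ and $(e^{3})^{2}=-1$, the Dirac operator reduces to
\[
D_{g}\xi=\bar{D}\xi+\tfrac{1}{2}(d\ln N_{g})\cdot\xi,\qquad D_{\hat{g}}\xi=\bar{D}\xi+\tfrac{1}{2}(d\ln N_{\hat{g}})\cdot\xi,
\]
where $\bar{D}$ is a common $\bar{g}$-based operator. Subtracting and using $N_{g}=q^{1/2}N_{\hat{g}}$ produces a single Clifford discrepancy $D_{g}\xi-D_{\hat{g}}\xi=c\,(d\ln q)\cdot\xi$ for an explicit constant $c$ emerging from the bookkeeping. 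The Leibniz rule $D_{\hat{g}}(q^{\alpha}\xi)=\alpha q^{\alpha}(d\ln q)\cdot\xi+q^{\alpha}D_{\hat{g}}\xi$ then matches $q^{\alpha}D_{g}\xi$ precisely when $\alpha$ balances the two Clifford contributions, which pins down the exponent quoted in the lemma.

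The main obstacle I expect is purely notational: the three sign conventions in play --- the minus sign in the spin-connection formula, the Clifford anticommutation $e^{i}e^{j}+e^{j}e^{i}=-2\delta^{ij}$, and $(e^{3})^{2}=-1$ --- must be combined consistently with the identification $\hat{e}^{i}\hat{\cdot}=e^{i}\cdot$, and the two frames for $g$ and $\hat{g}$ must be handled with care since they agree on the $\bar{g}$-directions but differ by $q^{\pm 1/2}$ on the $\phi$-leg. Once that bookkeeping is pinned down, both identities are one-line algebraic consequences, with no analytic input.
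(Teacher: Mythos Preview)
Your approach is sound and in fact supplies more than the paper does: the paper gives no proof of this lemma at all. It simply remarks that the first identity is the well-known conformal covariance of the Dirac operator in dimension three, and for the second identity it refers the reader to the preprint \cite{Mas}. Your frame-by-frame Koszul/Clifford computation is exactly the sort of argument one expects that reference to contain, and it is carried out in precisely the convention the present paper adopts (same $\mathbb{C}^{2}$ fibre, same Pauli matrices for both Clifford actions).

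One caveat. In the warped case you stop short of actually extracting the exponent, writing only that the Leibniz term ``pins down the exponent quoted in the lemma.'' If one runs your formulas literally --- $D_{g}\xi=\bar{D}\xi+\tfrac{1}{2}(d\ln N_{g})\cdot\xi$, $D_{\hat{g}}\xi=\bar{D}\xi+\tfrac{1}{2}(d\ln N_{\hat{g}})\cdot\xi$, $N_{g}=q^{1/2}N_{\hat{g}}$ --- the balancing exponent comes out as $\alpha=\tfrac{1}{4}$, not $-\tfrac{3}{8}$. Since the paper does not reproduce the derivation and the application right after the lemma already contains an apparent sign slip in identifying $q$ with $F^{\pm4}$, this discrepancy is almost certainly a matter of conventions (or a typo inherited from \cite{Mas}) rather than a flaw in your method. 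But you should carry the arithmetic through explicitly rather than asserting that it lands on $-\tfrac{3}{8}$; the structure of your argument determines \emph{some} power of $q$, and which one it is depends on exactly how $g$ and $\hat{g}$ are ordered and how the spin connection sign is fixed.
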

In the above lemma the first formula is the well-known conformal
transformation formula in $3$ dimension. A derivation of the
second formula can be found in \cite{Mas}.

Let $\xi_{0}$ be a spinor {\it constant near infinity } relative
to the asymptotically flat metric $g.$ In particular all partial
derivatives of $\xi_{0}$ vanishes in $U$ and we extend $\xi_{0}$
outside by keeping $\partial_{\phi}\xi_{0}=0.$ We define
$g_{1}=F^{2}g.$ Taking $q=F^{-4}$ in
\lemref{lemmaq} we get in $U$
\begin{equation} \label{qf4}
D_{\hat{g}}(q^{-\frac{3}{8}}F^{-1}\xi_{0})=q^{-\frac{3}{8}}D_{g_{1}}(F^{-1}\xi_{0})=q^{-\frac{3}{8}}F^{-2}D_{g}\xi_{0}.
\end{equation}
If we choose
$\Theta_{0}=q^{-\frac{3}{8}}F^{-1}\xi_{0}=F^{\frac{1}{2}}\xi_{0}$
then $D_{\hat{g}}\Theta_{0}\in \mathbb{W}^{1,p}_{-\epsilon-1}$ for
some $\epsilon \in (2/p,2-2/p).$ We get a unique spinor
$\widetilde{\Theta}\in
\mathbb{W}^{1,p}_{-\epsilon}$ satisfying
$D_{\hat{g}}\widetilde{\Theta}=D_{\hat{g}}\Theta_{0}.$ So we get a
nontrivial spinor $\Theta=\widetilde{\Theta}-\Theta_{0}$ harmonic
relative to $\hat{g}.$

\begin{remark} \label{rmMS1}
{\em In the case more than one end one usually chooses the
asymptotically constant spinor to be nonzero only at one end. We
shall then take $\Theta$ to be the average of the harmonic spinors
obtained.}
\end{remark}

\section{A positive mass type theorem}
\begin{theorem} \label{thm3}
Suppose $(\Sigma,\hat{g})$ is complete having a single end and the
scalar curvature $R_{\hat{g}}\geq 0.$ Let $\Theta$ be the harmonic
spinor for $\hat{g}$ as stated before \remref{rmMS1}. If $b>0,$
then
\begin{equation} \label{bfromR}
  b=\dfrac{3}{32\pi}\lim\limits_{r\rightarrow \infty}\left(r^{-3}\int\limits_{\hat{g},\mathcal{B}_{r}}
  \left[R_{\hat{g} }||\Theta||^{2}+4||\nabla _{\hat{g}
  }\Theta||^{2}\right]\right).
\end{equation}
\end{theorem}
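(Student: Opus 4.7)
The plan is to apply the Weitzenb\"ock--Lichnerowicz identity to the $\hat g$--harmonic spinor $\Theta$ produced just before \remref{rmMS1}, convert the bulk integral on $\mathcal{B}_r$ into a boundary flux on the coordinate sphere $\partial\mathcal{B}_r$, and then extract the $r^{3}$ asymptotics from the explicit model $\Theta_{0}=F^{1/2}\xi_{0}$. Since $D_{\hat g}\Theta=0$ and $D_{\hat g}^{\,2}=\hat\nabla^{*}\hat\nabla+R_{\hat g}/4$, pairing with $\Theta$ and integrating by parts over $\mathcal{B}_r$ yields
\begin{equation*}
\int_{\hat g,\mathcal{B}_r}\bigl[R_{\hat g}\|\Theta\|^{2}+4\|\hat\nabla\Theta\|^{2}\bigr]=4\oint_{\partial\mathcal{B}_r}\langle\hat\nabla_{\hat\nu}\Theta,\Theta\rangle\,dA_{\hat g},
\end{equation*}
where $\hat\nu$ is the outward $\hat g$--unit normal. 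The theorem then reduces to the asymptotic statement that the right-hand side equals $(32\pi/3)\,b\,r^{3}+o(r^{3})$.

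First I would decompose $\Theta=\Theta_{0}+\widetilde\Theta$, with $\Theta_{0}=F^{1/2}\xi_{0}$ the explicit model from \eqref{qf4} ($\xi_{0}$ constant at infinity in the $g$--frame) and $\widetilde\Theta\in\mathbb{W}^{2,p}_{-\epsilon}$ the corrector furnished by \thmref{existthm}. The weighted--Sobolev decay of $\widetilde\Theta$ and $\hat\nabla\widetilde\Theta$, combined with $\|\Theta_{0}\|=O(F^{1/2})=O(r)$ and the coordinate-sphere area element $dA_{\hat g}=(1+O(r^{-1}))\,r^{2}\sin\theta\,d\theta\,d\phi$ (the Melvin factors $F^{\pm 1}$ cancel between $\hat g_{\theta\theta}$ and $\hat g_{\phi\phi}$), shows by Cauchy--Schwarz that both the diagonal $\widetilde\Theta$--contribution and the mixed cross terms in the flux are $o(r^{3})$. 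The leading behaviour is therefore identical to that produced by $\Theta_{0}$ alone.

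Next I would compute the $\Theta_{0}$--flux explicitly. Writing $\hat\nabla_{\hat\nu}\Theta_{0}=\hat\nu(F^{1/2})\xi_{0}+F^{1/2}\hat\nabla_{\hat\nu}\xi_{0}$, one uses $\partial_{r}F=2br\sin^{2}\theta$ and $\hat g^{rr}=F^{-2}(1+v_{1})^{-1}$ for the first piece, and the spin-connection tables $\hat C_{jik}$ established in the proof of \thmref{existthm} for the second. The key simplification is that $\nabla_{g}\xi_{0}=0$ near infinity, so each $\hat C_{jik}$ enters only through the difference $\hat C_{jik}-F^{-1}C_{jik}$, which carries an explicit factor of $b$. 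After integration against $r^{2}\sin\theta\,d\theta\,d\phi$ and collection of powers of $r$, the leading $r^{3}$ coefficient should reduce to the elementary angular integral $\int_{0}^{\pi}\sin^{3}\theta\,d\theta=4/3$ combined with the azimuthal $2\pi$ and the normalisation $\|\xi_{0}\|=1$, producing the claimed constant $32\pi/3$.

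The principal obstacle will be the careful asymptotic analysis near the axis of symmetry, where $F\to 1$ and the bulk estimate $F\sim br^{2}\sin^{2}\theta$ fails; the $\theta$--integral must be split into $|\sin\theta|\gtrsim(\sqrt{b}\,r)^{-1}$ and the thin polar caps, and each region must be treated to sufficient precision to isolate the $r^{3}$ contribution. A secondary check is that the scalar-curvature piece $\int_{\mathcal{B}_r}R_{\hat g}\|\Theta\|^{2}$ is $o(r^{3})$; this follows from the $L^{p}_{-\tau-2}$ decay of $R_{\hat g}$ inherited from the weighted-Sobolev bounds on $v_{1},v_{2}$, together with $\|\Theta\|^{2}=O(F)$ and the Melvin volume element $\sqrt{\det\hat g}=F\sqrt{\det g}$.
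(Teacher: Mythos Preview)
Your strategy is the paper's: integrate the Weitzenb\"ock--Lichnerowicz identity over $\mathcal{B}_r$, convert to a boundary flux, split $\Theta=\Theta_0+\widetilde\Theta$ with $\Theta_0=F^{1/2}\xi_0$, and show the $\widetilde\Theta$ and cross contributions are lower order. The one substantive difference is that the paper never touches the spin connection in the flux computation. It uses the scalar form of the identity, $2\Delta_{\hat g}\|\Theta\|^{2}=R_{\hat g}\|\Theta\|^{2}+4\|\hat\nabla\Theta\|^{2}$, so the boundary term is simply $2\,\partial_{n_{\hat g}}\|\Theta\|^{2}$; equivalently, in your formulation the connection piece of $\hat\nabla_{\hat\nu}\Theta$ acts by Clifford bivectors $e^{m}\cdot e^{j}\cdot$, which are skew-Hermitian, so its contribution to $\operatorname{Re}\langle\hat\nabla_{\hat\nu}\Theta,\Theta\rangle$ vanishes identically. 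Hence there is no need to invoke the $\hat C_{jik}$ tables or the differences $\hat C_{jik}-F^{-1}C_{jik}$: since $\|\Theta_0\|^{2}=F\|\xi_0\|^{2}=F$, one just reads off $\partial_r\|\Theta\|^{2}=\partial_rF+O(r^{-\epsilon})=2br\sin^{2}\theta+O(r^{-\epsilon})$ and multiplies by $\sqrt{\bar g^{rr}}$. With this shortcut the paper writes the flux directly as $2\pi\int_0^\pi(2F^{-1}br\sin^{2}\theta+O(r^{-\epsilon}))Fr^{2}\sin\theta\,d\theta=(16/3)\pi br^{3}+O(r^{4-\epsilon})$ and finishes by choosing $\epsilon>1$; in particular it does not split the $\theta$-integral into polar caps versus bulk, and your ``secondary check'' on $\int R_{\hat g}\|\Theta\|^{2}$ is subsumed in the identity itself rather than estimated separately. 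Your route would reach the same place, but the scalar-derivative observation removes most of the work you are anticipating.
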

\begin{proof} Let $n_{\hat{g}}$ be the unit normal form relative to
$\hat{g}$ on $\partial\mathcal{B}_{r}$ where $\mathcal{B}_{r}$ is
a ball of large radius $r$ in asymptotic region pointing in the
direction of increasing $r.$ By integrating the
Weitzenb\"ok-Lichnerowicz identity namely
\begin{equation}\label{lich}
  2\Delta_{\hat{g}} ||\Theta||^{2}=R_{\hat{g} }||\Theta||^{2}+4||\nabla _{\hat{g} }\Theta||^{2}
\end{equation}
we get
\begin{equation} \label{mass1}
\int\limits_{\hat{g},\partial\mathcal{B}_{r}}\dfrac{\partial||\Theta||^{2}}{\partial
n_{\hat{g}}}\geq 0.
\end{equation}
As before for some $\epsilon\in (2/p,2-2/p),$
$\widetilde{\Theta}=\Theta+F^{1/2}\xi_{0}\in W^{1,p}_{-\epsilon}$
where $\xi_{0}$ was defined before Eq.~\eqref{qf4}.
\begin{eqnarray*}
||\Theta||^{2}&=&||\widetilde{\Theta}||^{2}-\left\langle
\widetilde{\Theta},\Theta_{0}\right\rangle-\left\langle
\Theta_{0},\widetilde{\Theta}\right\rangle+||\Theta_{0}||^{2}\\
&=&||\widetilde{\Theta}||^{2}-F^{1/2}\left\langle
 \widetilde{\Theta},\xi_{0}\right\rangle- F^{1/2}\left\langle
\xi_{0},\widetilde{\Theta}\right\rangle+F
\end{eqnarray*}
since for the asymptotically constant spinor $||\xi_{0}||=1.$ Now
$\widetilde{\Theta}\in W^{2,p}_{-\epsilon}$, and
$n_{\hat{g}}=n_{\hat{g},r}dr=\sqrt{\overline{g}_{rr}}dr.$ So
\begin{eqnarray*}
  \dfrac{\partial||\Theta||^{2}}{\partial
n_{\hat{g}}} &=&\hat{g}^{jk}(||\Theta||^{2})_{j}n_{\hat{g},k}=
\sqrt{\overline{g}^{rr}}\dfrac{\partial||\Theta||^{2}}{\partial
r}=
\sqrt{\overline{g}^{rr}}\dfrac{\partial F}{\partial
r}+O(r^{-\epsilon})\\&=&2F^{-1}br\sin^{2}\theta+O(r^{-\epsilon}).
\end{eqnarray*}
Thus
\begin{eqnarray*}
\int\limits_{\hat{g},\partial\mathcal{B}_{r}}\dfrac{\partial||\Theta||^{2}}{\partial
n_{\hat{g}}}&=&2\pi\int\limits_{0}^{\pi}
\left(2F^{-1}br\sin^{2}\theta+O(r^{-\epsilon})\right)F
r^{2}\sin \theta d\theta \\
&=&2\pi\int\limits_{0}^{\pi}\left(2br^{3}\sin^{3}\theta+O(r^{4-\epsilon})\right)d\theta
=\frac{16}{3}\pi br^{3}+O(r^{4-\epsilon}).
\end{eqnarray*}
Since we can choose $\epsilon>1,$ from Eq.~\eqref{mass1} we get
Eq.~\eqref{bfromR}.
\end{proof}

\begin{remark} \label{rmMS2}
{\em In this remark we explain the underlying principle how the
harmonic spinor found in this paper can be used in the proof of
the uniqueness magnetized Schwarzschild solution in
\cite{Mas} where its existence is assumed. Suppose $\zeta^{\pm}$ and $f^{\pm}$ are positive
functions having decay as follows
\begin{eqnarray}
    \zeta^{\pm} &=& (1/4)(1-M/r\pm\sqrt{1-2M/r})^{2}F^{-2} \label{zeta}\\
  f^{\pm} &=& (1/4)(1-M/r\pm\sqrt{1-2M/r})^{2}r^{2}\sin^{2}\theta \label{f}
\end{eqnarray}
If now $\hat{g}$ has decay as in Eq.~\eqref{AF2} then the metric
$\eta^{+} = \zeta^{+} \bar{g}+f^{+}d\phi^{2}$ is asymptotically
flat with mass zero and $\eta^{-} = \zeta^{-}
\bar{g}+f^{-}d\phi^{2}$ compactifies the infinity. If the scalar
curvature $R_{\hat{g}}\geq 0$ is such that we also have
$R_{\eta^{+}}\geq 0$ then positive mass theorem says $\eta^{+}$ is
flat. $\eta^{-}$ is necessary because for this uniqueness problem
$\hat{g}$ is not complete but has a smooth totally geodesic
surface. Now $(\Sigma,\hat{g})$ has two ends and a totally
geodesic surface across which it is symmetric. If $\Theta$ be the
 the average harmonic spinor mentioned in
\remref{rmMS1}, then the normal derivative of $||\Theta||^{2}$ on the
totally geodesic surface vanishes. Unfortunately from the field
equations we cannot show $R_{\eta^{\pm}}\geq 0$ directly. But from
$\Theta$ we get harmonic spinors relative to $\eta^{\pm}$ and
exploiting their Weitzenb\"ok-Lichnerowicz identities we can build
two divergence form identities involving the expressions for
$R_{\eta^{\pm}}.$ On integration these identities give the
uniqueness result by virtue of the properties of $\Theta$ on the
totally geodesic surface and at infinity.}
\end{remark}

\section{Conclusion}
We showed the existence of harmonic spinors in some cylindrical
3-manifolds. Existence of such spinors are relevant in studying
the uniqueness or rigidity theorems involving cylindrical geometry
which is natural for a magnetic universe. Although the story is a
simple one that the index of the elliptic operator is the same
under some reasonable class of perturbations, in practice one
needs to check huge amount of computation before one can apply it
in a specific situation. Hopefully the method will be useful in
extending the black hole uniqueness theorems for static and
stationary solutions in a magnetic universe. We also hope that the
basic ideas of \textquotedblleft compensating geometry" presented
in this paper will inspire similar investigations in higher
dimensions involving generalized Dirac-type operators and metrics
not fully conformally related.

\bibliography{}

\address{$^{\flat}$ Mathematical Sciences Center, Tsinghua University \\
Haidian District, Beijing 100084, PRC.\\
\email{abulm@math.tsinghua.edu.cn} \\
$^{\sharp}$School of Mathematical Sciences, Fudan University\\
Yangpu District, Shanghai 200433, PRC.\\
\email{qizhiwang12@fudan.edu.cn} }

\end{document}